\documentclass[letterpaper,10 pt,conference]{ieeeconf}
\IEEEoverridecommandlockouts 
\usepackage{amsmath,amssymb,amsfonts}
\usepackage{graphicx}
\usepackage{subfig}
\usepackage{textcomp}
\usepackage[T1]{fontenc}
\usepackage{color}
\usepackage{xcolor}
\usepackage{multirow}
\usepackage{multicol,lipsum}
\usepackage{epsfig}
\usepackage{algorithm}
\usepackage{algorithmic}
\usepackage{multirow}
\usepackage{hyperref}
\usepackage{mathtools}
\usepackage{bm}
\usepackage{listings}

\newtheorem{definition}{Definition}
\newtheorem{theorem}{Theorem}
\newtheorem{problem}{Problem}
\newtheorem{proposition}{Proposition}
\newtheorem{lemma}{Lemma}
\newtheorem{example}{Example}

\usepackage{array}
\usepackage{float}
\usepackage[short]{optidef}
\usepackage{mdframed}
\mdfdefinestyle{mystyle}{
    backgroundcolor=white!10}

\newcommand{\oomit}[1]{}

\newcolumntype{M}[1]{>{\centering\arraybackslash}m{#1}}
\newcolumntype{N}{@{}m{0pt}@{}}

\begin{document}

\title{Quantitative Verification of Finite-Time Constrained Occupation Measures for Continuous-time Stochastic Systems}

\author{Bai Xue$^1$ and C.-H. Luke Ong$^2$\\
\small 1. KLSS, Institute of Software, Chinese Academy of Sciences, Beijing, China\\
2. College of Computing and Data Science, Nanyang Technological University, Singapore\\
Email: xuebai@ios.ac.cn; luke.ong@ntu.edu.sg
} 

\maketitle
\thispagestyle{empty}
\pagestyle{empty}

\begin{abstract}
This paper addresses the quantitative verification of finite-time constrained occupation time for stochastic continuous-time systems governed by stochastic differential equations (SDEs). Unlike classical reachability analysis, which focuses on single-event properties such as entering a target set, many autonomous tasks—including surveillance, wireless charging, and chemical mixing—require a system to accumulate a prescribed duration within a target region while strictly maintaining safety constraints. We propose a barrier-certificate framework to compute rigorous upper and lower bounds on the probability that such cumulative specifications are satisfied over a finite time horizon. By introducing a stopped process that freezes the system once it reaches the boundary of the safe set, we derive three classes of certificates: one for upper bounds and two for lower bounds. The proposed approaches are validated through numerical examples implemented using semidefinite programming.
\end{abstract}

\section{Introduction}

Stochastic continuous-time systems serve as the canonical mathematical model for safety-critical processes operating under uncertainty, ranging from autonomous aerial vehicles subject to wind gusts to chemical reactors influenced by thermal noise \cite{oksendal2003stochastic}. Formal verification of these systems has traditionally prioritized binary properties: \textit{safety} (ensuring the system never enters an unsafe state) \cite{Prajna2007} or \textit{reachability} (ensuring the system eventually hits a target state) \cite{Xue2023,Xue2024Framework,jafarpour2025probabilistic}. However, for a broad class of autonomous applications, these binary metrics are insufficient to capture the desired system performance.

Consider an autonomous underwater vehicle (AUV) inspecting a submerged pipeline. Due to strong and stochastic ocean currents, stationary hovering is infeasible. Instead, the AUV must execute repeated passes or dynamic station-keeping maneuvers to keep its sensors focused on the target area. The mission requires accumulating a prescribed amount of valid scan data—corresponding to occupation time—while counteracting drift forces that may push the vehicle toward hazardous obstacles. In such scenarios, the appropriate metric of success is the constrained occupation time: the total time spent in a target set before any safety violation occurs. This notion is strictly stronger than the classical occupation time \cite{darling1957occupation}, which accounts only for the dwell time in the target and does not incorporate safety constraints.

In this paper, we develop a barrier-certificate framework for the quantitative verification of finite-time constrained occupation properties. Given a finite time horizon and a required service duration, our objective is to compute rigorous upper and lower bounds on the constrained occupation probability. Specifically, we quantify the probability that the system spends at least the prescribed amount of time within a compact target region during the horizon while remaining inside a designated open, bounded safe set throughout the process. To address this problem, we extend stochastic barrier functions—traditionally used for safety and reachability analysis—to capture the cumulative “memory” of time spent in the target set. Within this unified framework, we introduce three complementary classes of barrier certificates to derive rigorous probability bounds:
\begin{itemize}
    \item \textbf{Dissipative Barriers (Upper Bounds)}: These derive upper probability bounds by imposing a dissipative balance between exponential decay and additive drift.
\item \textbf{Attractive Barriers I (Lower Bounds)}: These yield lower bounds by penalizing time spent outside the target set while relying on a restoring tendency that drives trajectories back toward the target. In this class the offset satisfies $\beta\le 0$: the case $\beta=0$ typically corresponds to dynamics that are effectively attractive toward the target, whereas $\beta<0$ models leaky or transient regimes in which the required occupation time must be accumulated through repeated visits.

\item \textbf{Attractive Barriers II (Lower Bounds)}: These are intended for systems with strict attraction and employ a bidirectional weighted formulation (the two-speed clock). The construction exploits a positive drift term ($\beta>0$) favoring the target, yielding tighter bounds for strongly attracting systems.
\end{itemize}
The effectiveness of the proposed framework is demonstrated on polynomial stochastic systems, where the barrier conditions are formulated as semidefinite programs (SDPs) and solved using sum-of-squares (SOS) techniques.

\subsection*{Related Work}
The verification of stochastic systems has become a powerful paradigm, with barrier certificates serving as an effective tool for deriving rigorous probability bounds \cite{Prajna2007,Xue2021,Zikelic2023,Xue2023,Xue2024,Prajna2007}. These techniques have been extended to probabilistic programs and $\omega$-regular properties~\cite{chakarov2013probabilistic,abate2025quantitative,henzinger2025supermartingale,wang2025verifying}; see \cite{cao2025comparative} for a survey. However, most existing work focuses on infinite-horizon guarantees, whereas practical systems often require finite-time assurances due to operational constraints. This motivates the study of finite-time analysis, which we review next.

\noindent\textbf{Discrete-Time Systems.} 
$c$-martingale-based barrier conditions were proposed for formal verification for stochastic discrete-time systems, with foundational contributions in~\cite{kushner1967,steinhardt2012,Jagtap2018,Jagtap2020,Santoyo2021}. They allow the expected value of a barrier function to increase over time, naturally yielding lower bounds on safety probabilities, though they do not provide upper bounds. More recently,~\cite{Zhi2024,Xue2024Finite} proposed barrier conditions capable of deriving both lower and upper bounds on finite-time safety and reach-avoid probabilities.

\noindent\textbf{Continuous-Time Systems.} 
Early work on finite-time verification for continuous-time systems modeled by SDEs leveraged comparison theorems to bound exit probabilities using one-dimensional reference processes, such as the Ornstein–Uhlenbeck process~\cite{Nilsson2020}. The concept of $c$-martingales was adapted to continuous time in~\cite{steinhardt2012} and subsequently refined in~\cite{Santoyo2021} through state-dependent bounds on certificate growth to tighten verification results. Later,~\cite{Feng2020} introduced time-varying barrier functions satisfying Doob's supermartingale inequality over finite intervals. Recent frameworks have moved beyond strict nonnegative supermartingales by relaxing pratial differential equations governing finite-time reachability or applying Gr\"onwall's inequality to the generator's differential form, leading to new barrier conditions in~\cite{Xue2024CDC,Xue2024Framework} that yield both lower and upper bounds on safety and reach-avoid probabilities. 

\noindent\textbf{Differences from Current Work.} The aforementioned works primarily address single-event properties, such as safety (remaining within a set) or reachability (hitting a target set at least once). In contrast, this paper addresses the problem of constrained occupation time—ensuring a system accumulates a specific duration in a target region while remaining safe. This represents a significant generalization of the reach-avoid problem to cumulative behaviors. 

\noindent\textbf{Organization.} Section \ref{sec:preliminaries} introduces the problem formulation and mathematical preliminaries. Section \ref{sec:main_results} details our barrier conditions for upper and lower bounding finite-time constrained occupation probabilities. Then, we demonstrate the efficacy of our approach through numerical examples in Section \ref{sec:examples} and conclude in Section \ref{sec:conclusion}.

\section{Preliminaries}
\label{sec:preliminaries}

This section defines the stochastic system model within a formal probability space, formulates the occupation time metric, and reviews the necessary stochastic calculus tools.

\noindent\textbf{Notations:} $\mathbb{R}$ is the set of real numbers; $\Delta_1 \setminus \Delta_2$ denotes set difference, $\overline{\Delta_1}$ the closure of $\Delta_1$, and $\partial \Delta_1$ its boundary; and $1_A(\bm{x})$ is the indicator function of a set $A$; $1_A$ denotes whether the event $A$ is true.

\subsection{Problem Statement}

We consider a continuous-time stochastic dynamical system defined on a probability space $(\Omega, \mathcal{F}, \mathbb{P})$, where $\Omega$ is the sample space, $\mathcal{F}$ is the $\sigma$-algebra of events, and $\mathbb{P}$ is the probability measure. The space is endowed with a natural filtration $\mathbb{F} = (\mathcal{F}_t)_{t \geq 0}$, which represents the accumulation of information over time. The evolution of the system state $X_t \in \mathbb{R}^n$ is governed by the It\^o SDE:
\begin{equation}
    dX_t = f(X_t)dt + \sigma(X_t)dW_t, \quad X_0 = x_0 \in \mathcal{X},
    \label{eq:sde}
\end{equation}
where $f: \mathbb{R}^n \to \mathbb{R}^n$ is the drift vector field, $\sigma: \mathbb{R}^n \to \mathbb{R}^{n \times m}$ is the diffusion matrix, and $W_t$ is a standard $m$-dimensional Brownian motion adapted to the filtration $\mathcal{F}$. We assume $f$ and $\sigma$ are locally Lipschitz continuous to ensure the existence and uniqueness of solutions.

We analyze the system's behavior with respect to two specific subsets of the state space:
\begin{enumerate}
    \item \textbf{Safe Set ($\mathcal{X} \subset \mathbb{R}^n$):} An open, bounded domain representing the valid operating region. 
    \item \textbf{Target Set ($\mathcal{T} \subseteq \mathcal{X}$):} A measurable, compact subset of the safe set representing the region of interest where service time is accumulated.
\end{enumerate}

\noindent\textbf{Safety Exit Time.} The safety exit time $\tau_{safe}$ is the first instant the system leaves the safe set:
\begin{equation*}
    \tau_{safe} = \inf \{ t > 0 : X_t \notin \mathcal{X} \}.
\end{equation*}
If the trajectory remains in $\mathcal{X}$ indefinitely, we set $\tau_{safe} = \infty$.

\noindent\textbf{Constrained Occupation Time.} Unlike classical reachability, which concerns a single visit, we focus on the cumulative duration the system resides in $\mathcal{T}$ while strictly maintaining safety. The constrained occupation time $O_\mathcal{T}(t)$ over the interval $[0, t]$ is defined as:
\begin{equation}
\label{opt}
    O_\mathcal{T}(t) = \int_{0}^{t \land \tau_{safe}} 1_\mathcal{T}(X_s) \, ds.
\end{equation}
 Crucially, if the system violates safety ($\tau_{safe} < t$), the accumulation of occupation time ceases at $\tau_{safe}$.

We address the quantitative verification of finite-time occupation measures.
\begin{problem} 
\label{problem}
Given a horizon $H \in (0, \infty)$ and a threshold $K \in (0, H]$, compute upper and lower bounds on the constrained occupation probability $\mathbb{P}(O_\mathcal{T}(H) \geq K)$. 
\end{problem}

\subsection{Infinitesimal Generator}

The connection between the stochastic dynamics and our barrier functions is established via the infinitesimal generator $\mathcal{L}$. For a twice continuously differentiable function $v \in C^2(\mathbb{R}^n)$, the operator associated with the SDE \eqref{eq:sde} is defined:
\begin{equation}
    \mathcal{L}v(x) = \nabla v(x) \cdot f(x) + \frac{1}{2} \operatorname{Tr}\left( \sigma(x)^\top H_v(x) \sigma(x) \right),
    \label{eq:generator}
\end{equation}
where $\nabla v$ is the gradient, $H_v$ the Hessian matrix of $v$, and $\operatorname{Tr}(\cdot)$ denotes the trace operator, i.e., the sum of the diagonal entries of a square matrix.

A fundamental tool connecting the generator to probabilistic quantities is Dynkin's formula \cite{oksendal2003stochastic}. 

\begin{definition}[Stopping Time]
A random variable $\tau: \Omega \to [0,\infty)$ is a \textit{stopping time} with respect to the filtration $\{\mathcal{F}_t\}_{t \ge 0}$ if, for every deterministic time $t \ge 0$, the event $\{\tau \leq t\}$ is $\mathcal{F}_t$-measurable.
\end{definition}

For any bounded stopping time $\tau$ and any function $v \in C^2(\mathbb{R}^n)$ with compact support (or satisfying suitable polynomial growth conditions), we have:
\begin{equation*}
    \mathbb{E}[v(X_\tau)\mid X_0=x_0] = v(x_0) + \mathbb{E}\left[ \int_0^\tau \mathcal{L}v(X_s) \, ds \mid X_0=x_0\right].
\end{equation*}
This formula plays a central role in deriving the barrier certificate conditions in the following sections.

\subsection{Martingales and Markov's Inequality}
This subsection recalls fundamental results from martingale theory \cite{williams1991probability}—specifically martingales and Markov's inequality—that enable the rigorous analysis of system behavior over random horizons.

\begin{definition}[Continuous-Time Martingale]
Let $(\Omega,\mathcal{F},\mathbb{P})$ be a probability space and 
$\{\mathcal{F}_t\}_{t\ge0}$ be a filtration satisfying the usual
conditions (right-continuous and complete). 
An $\{\mathcal{F}_t\}$-adapted stochastic process $\{X_t\}_{t\ge0}$ 
is called a \emph{martingale} if the following conditions hold:
\begin{enumerate}
    \item $\mathbb{E}[|X_t|] < \infty$ for all $t \ge 0$;
    \item For all $0 \le s \le t$, $\mathbb{E}[X_t \mid \mathcal{F}_s] = X_s, \quad \text{almost surely}$.
\end{enumerate}
If the equality is replaced by 
$\mathbb{E}[X_t \mid \mathcal{F}_s] \le X_s$, the process is called a 
\emph{supermartingale}. If it is replaced by 
$\mathbb{E}[X_t \mid \mathcal{F}_s] \ge X_s$, the process is called a 
\emph{submartingale}.
\end{definition}

To handle analysis over random horizons (such as $\tau_{\text{safe}}$), we utilize two fundamental results from martingale theory.

\begin{proposition}[Stochastic integral is a martingale]
\label{pro:martin}
Let $W_t$ be a standard Brownian motion adapted to $\{\mathcal{F}_t\}$.
Let $H_t$ be an $\{\mathcal{F}_t\}$-predictable process such that for every $t\ge0$, $\mathbb{E} \left[\int_0^t |H_s|^2 \,ds\right] < \infty$. Then the stochastic integral $M_t := \int_0^t H_s \, dW_s$ is a (square-integrable) martingale with $\mathbb{E}[M_t]=0$ for all $t\ge0$.
Moreover, if $\tau$ is any bounded stopping time (i.e. $\tau\le H$ a.s.\ for some constant $H$),
then $M_{t\wedge\tau}$ is a martingale and in particular $\mathbb{E}[M_\tau]=0$.
\end{proposition}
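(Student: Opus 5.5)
The plan is the standard construction of the Itô integral: first prove the martingale property for simple predictable integrands, then extend to general integrands by $L^2$-approximation using the Itô isometry. The stopped statement will then follow from the optional stopping theorem for bounded stopping times.

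\emph{Step 1 (simple integrands).} Take $H_s=\sum_{i} \xi_i 1_{(t_i,t_{i+1}]}(s)$ with each $\xi_i$ being $\mathcal{F}_{t_i}$-measurable and square-integrable. Then $M_t=\sum_i \xi_i (W_{t_{i+1}\wedge t}-W_{t_i\wedge t})$. For $s\le t$, conditioning term by term works as follows:
\begin{itemize}
\item if $t_i \ge s$, the tower property and independence of Brownian increments from $\mathcal{F}_{t_i}$ give $\mathbb{E}[\xi_i(W_{t_{i+1}\wedge t}-W_{t_i\wedge t})\mid\mathcal{F}_s]=0$;
\item if $t_i<s$, the increment splits at $s$, and only the part up to $s$ survives.
\end{itemize}
This yields $\mathbb{E}[M_t\mid\mathcal{F}_s]=M_s$. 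Integrability, and indeed square-integrability, follows from the Itô isometry for simple processes, $\mathbb{E}[M_t^2]=\mathbb{E}\int_0^t H_s^2\,ds$. This isometry is itself proved by expanding the square: cross terms vanish by the same conditioning argument, and diagonal terms give $\mathbb{E}[\xi_i^2](t_{i+1}-t_i)$.

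\emph{Step 2 (extension).} For general predictable $H$ with $\mathbb{E}\int_0^t H_s^2\,ds<\infty$, choose simple $H^{(n)}\to H$ in $L^2(\Omega\times[0,t])$. This density of simple predictable processes is the main technical obstacle. The approach is to first reduce to bounded $H$ by truncation, then to continuous adapted processes by time-averaging $\frac1h\int_{s-h}^s H_r\,dr$, and finally to discretize in time. For predictable, rather than progressively measurable, $H$, one can alternatively invoke a monotone class argument starting from indicators of predictable rectangles.

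By the isometry, $M^{(n)}_t$ is Cauchy in $L^2$, and $M_t$ is defined as its limit. Since conditional expectation is an $L^2$-contraction, $\mathbb{E}[M^{(n)}_t\mid\mathcal{F}_s]=M^{(n)}_s$ passes to the limit, so $\mathbb{E}[M_t\mid\mathcal{F}_s]=M_s$, $\mathbb{E}[M_t^2]<\infty$, and $\mathbb{E}[M_t]=\mathbb{E}[M_0]=0$. Doob's $L^2$ maximal inequality gives uniform convergence on $[0,t]$ along a subsequence, so a continuous version of $M$ exists. We need this continuity for the next step.

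\emph{Step 3 (bounded stopping time).} Since $M$ is a continuous martingale and the filtration satisfies the usual conditions, the optional stopping theorem applies. For $\tau\le H$, the stopped process $M_{t\wedge\tau}$ is a martingale. To see this, first take stopping times with finitely many values, where a direct computation works. Then approximate $\tau$ from above by the dyadic stopping times $\tau_n=\lceil 2^n\tau\rceil/2^n$, and pass to the limit using continuity of paths and uniform integrability of $\{M_{\sigma}:\sigma\le H\}$, which comes from $\mathbb{E}[\sup_{s\le H}M_s^2]\le 4\mathbb{E}[M_H^2]<\infty$. Finally, $\mathbb{E}[M_\tau]=\mathbb{E}[M_{H\wedge\tau}]=\mathbb{E}[M_0]=0$.

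Alternatively, one can avoid optional stopping for this last claim by noting that $M_{t\wedge\tau}=\int_0^t H_s1_{\{s\le\tau\}}\,dW_s$. Here $1_{\{s\le\tau\}}$ is left-continuous and adapted, hence predictable, so Steps 1–2 apply directly to the integrand $H1_{[0,\tau]}$. This identity, however, itself requires checking on simple integrands and extending, which is where care is needed.
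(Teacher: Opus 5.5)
Your argument is correct. Note, though, that the paper never proves this proposition: it is recalled as background from martingale theory (citing Williams) and then used as a black box in the appendix proofs. What you give is the standard textbook construction:
\begin{itemize}
\item the martingale property and It\^o isometry for simple predictable integrands;
\item extension to general integrands by density in $L^2(\Omega\times[0,t])$;
\item a continuous modification obtained from Doob's $L^2$ maximal inequality;
\item optional stopping for bounded $\tau$, or alternatively the identity $M_{t\wedge\tau}=\int_0^t H_s 1_{\{s\le\tau\}}\,dW_s$.
\end{itemize}
The paper's citation buys brevity. Your reconstruction buys an explicit list of the hypotheses the result actually rests on.

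The most important of these hypotheses is one you use silently in Step 1: that $W_u-W_{t_i}$ is independent of $\mathcal{F}_{t_i}$ for $u\ge t_i$, i.e.\ that $W$ is an $\{\mathcal{F}_t\}$-Brownian motion. This is strictly stronger than the literal hypothesis ``$W$ adapted to $\{\mathcal{F}_t\}$'' in the statement, and it cannot be dropped. For example, suppose $\sigma(W_1)\subseteq\mathcal{F}_0$ and set $H_s\equiv \mathrm{sign}(W_1)$. Then $H$ is a simple predictable integrand with $\mathbb{E}\int_0^t H_s^2\,ds=t<\infty$. Yet $\int_0^1 H_s\,dW_s=\mathrm{sign}(W_1)\,W_1=|W_1|$, which has mean $\sqrt{2/\pi}\neq 0$.

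The stronger property does hold for the (augmented) natural filtration the paper works with, so your reading of the statement is the intended one. Likewise, the usual conditions you invoke in Step 3 are built into the paper's definition of a martingale.
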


\begin{proposition}[Markov's Inequality]
\label{thm:markov}
Let $X$ be a non-negative random variable defined on a probability space $(\Omega, \mathcal{F}, \mathbb{P})$. For any constant $a > 0$, $\mathbb{P}(X \ge a) \le \frac{\mathbb{E}[X]}{a}$.
\end{proposition}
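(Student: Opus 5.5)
Markov's inequality (Proposition~\ref{thm:markov}) follows from comparing $X$ with a simple random variable that sits below it, and then using monotonicity of expectation.

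\textbf{Step 1: pointwise bound.} Fix $a>0$. We claim that, pointwise on $\Omega$,
\begin{equation*}
a\,1_{\{X\ge a\}}(\omega)\le X(\omega).
\end{equation*}
There are two cases.
\begin{itemize}
\item If $X(\omega)\ge a$, the left side equals $a$, and $a\le X(\omega)$.
\item If $X(\omega)<a$, the left side equals $0$, and $0\le X(\omega)$ because $X$ is non-negative.
\end{itemize}
Measurability of $\{X\ge a\}$ follows because $X$ is a random variable on $(\Omega,\mathcal{F},\mathbb{P})$. Hence $a\,1_{\{X\ge a\}}$ is a non-negative simple random variable.

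\textbf{Step 2: take expectations.} Both sides of the pointwise inequality are non-negative measurable functions. Expectation is monotone for such functions, with $\mathbb{E}[X]$ allowed to equal $+\infty$, in which case the claim is trivial. Taking expectations gives
\begin{equation*}
a\,\mathbb{P}(X\ge a)=\mathbb{E}\big[a\,1_{\{X\ge a\}}\big]\le\mathbb{E}[X].
\end{equation*}

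\textbf{Step 3: divide.} Since $a>0$, dividing both sides by $a$ gives $\mathbb{P}(X\ge a)\le \mathbb{E}[X]/a$.

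\textbf{Where care is needed.} No step is genuinely hard. The points to watch are:
\begin{itemize}
\item the non-negativity of $X$ is used, in the case $X(\omega)<a$ of Step 1;
\item the case $\mathbb{E}[X]=\infty$ should be handled explicitly, where the bound holds vacuously;
\item the inequality $a>0$ is needed for the division in Step 3.
\end{itemize}
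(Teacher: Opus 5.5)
Your proof is correct: the pointwise bound $a\,1_{\{X\ge a\}}\le X$, followed by monotonicity of expectation for non-negative measurable functions and division by $a>0$, is the standard argument, and your uses of non-negativity and of the case $\mathbb{E}[X]=\infty$ are sound. The paper gives no proof of its own. It simply recalls this proposition as a standard fact from the cited martingale-theory reference, so there is no alternative route to compare against.
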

\section{Quantitative Verification of Finite-time Constrained Occupation Time}
\label{sec:main_results}
We derive sufficient barrier conditions to solve Problem \ref{problem}, consisting of one class for upper-bounding the constrained occupation probability of interest and two distinct classes for lower-bounding it. 

The construction of these barrier conditions relies on a stopped process.

\begin{definition}[Stopped Process]
\label{sp}
Let $\tau_{safe} = \inf \{ t > 0 : X_t \notin \mathcal{X} \}$ be the safety exit time. We define the \textit{stopped process} $\{\tilde{X}_t\}_{t \geq 0}$ as the trajectory of the system \eqref{eq:sde} frozen at the moment it touches the boundary of the safe set:
\begin{equation}
    \tilde{X}_t = X_{t \wedge \tau_{safe}} = 
    \begin{cases} 
        X_t & \text{if } t < \tau_{safe}, \\
        X_{\tau_{safe}} & \text{if } t \geq \tau_{safe}.
    \end{cases}
\end{equation}
By construction, the stopped process $\tilde{X}_t$ remains within the closure $\overline{\mathcal{X}}$ for all $t \geq 0$. This formulation ensures that any functional of the trajectory (such as the occupation time or barrier value) ceases to evolve upon safety violation.
\end{definition}

The stopped process $\tilde{X}_t$ inherits right-continuity and the strong Markov property from $X_t$. It evolves according to the original SDE \eqref{eq:sde} on $\mathcal{X}$, and becomes constant after reaching the boundary $\partial \mathcal{X}$. Let $\widetilde{\mathcal L}$ denote the infinitesimal generator of $\tilde X_t$, defined by $\widetilde{\mathcal L}v(x)
=
\lim_{t\to 0^+}
\frac{\mathbb{E}[v(\tilde X_t)\mid \tilde X_0=x] - v(x)}{t}$. Then for any $v \in C^2(\overline{\mathcal{X}})$, we have
\begin{equation}
\widetilde{\mathcal L}v(x)=
\begin{cases}
\mathcal L v(x), & x \in \mathcal{X},\\[4pt]
0, & x \in \partial \mathcal{X},
\end{cases}
\end{equation}
where the generator $\mathcal L$ of the SDE \eqref{eq:sde} is given by \eqref{eq:generator}\cite{kushner1967}.

The following lemma establishes that the occupation time computed on the stopped trajectory is identical to the constrained occupation time of the original system, allowing us to perform verification directly on the stopped dynamics.

\begin{lemma}[Equivalence of Occupation Times]
\label{lemma:eot}
Let $\tilde{X}_t = X_{t \wedge \tau_{safe}}$ be the stopped process, and let $\tilde{O}_{\mathcal{T}}(t) = \int_0^t 1_{\mathcal{T}}(\tilde{X}_s) ds$ be the occupation time of the stopped process.
Then, for any horizon $H \in (0, \infty)$:
\begin{equation*}
    O_{\mathcal{T}}(H) = \tilde{O}_{\mathcal{T}}(H) \quad \text{almost surely.}
\end{equation*}
Consequently, the verification problem is equivalent:
\begin{equation*}
    \mathbb{P}(O_{\mathcal{T}}(H) \geq K) = \mathbb{P}(\tilde{O}_{\mathcal{T}}(H) \geq K).
\end{equation*}
\end{lemma}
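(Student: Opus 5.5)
The plan is to prove the identity pathwise, on the full-measure event where the sample path $t\mapsto X_t(\omega)$ is continuous. The probabilistic equivalence $\mathbb{P}(O_{\mathcal{T}}(H)\ge K)=\mathbb{P}(\tilde O_{\mathcal{T}}(H)\ge K)$ then follows at once, since the two random variables agree almost surely. Fix such an $\omega$ and split the integral defining $\tilde O_{\mathcal{T}}(H)$ at $H\wedge\tau_{safe}$:
\begin{equation*}
\tilde O_{\mathcal{T}}(H)=\int_0^{H\wedge\tau_{safe}}1_{\mathcal{T}}(\tilde X_s)\,ds+\int_{H\wedge\tau_{safe}}^{H}1_{\mathcal{T}}(\tilde X_s)\,ds .
\end{equation*}
On $[0,H\wedge\tau_{safe})$ we have $\tilde X_s=X_s$ by Definition~\ref{sp}. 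The single endpoint has Lebesgue measure zero, so the first integral equals $O_{\mathcal{T}}(H)$ as defined in \eqref{opt}. It therefore remains to show that the second integral vanishes.

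If $\tau_{safe}\ge H$, the second integral is over an empty or degenerate interval, and we are done. Otherwise $\tau_{safe}<H$, and on $[\tau_{safe},H]$ we have $\tilde X_s\equiv X_{\tau_{safe}}$. The goal is then to show $X_{\tau_{safe}}\notin\mathcal{T}$. This is the only step with real content, and it is where the topological hypotheses enter.

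\textbf{Step 1: $X_{\tau_{safe}}\in\partial\mathcal{X}$.}
\begin{itemize}
\item By definition of the infimum, there are times $t_k\downarrow\tau_{safe}$ with $X_{t_k}\in\mathbb{R}^n\setminus\mathcal{X}$. This set is closed because $\mathcal{X}$ is open, so continuity gives $X_{\tau_{safe}}\notin\mathcal{X}$.
\item If $\tau_{safe}>0$, then $X_s\in\mathcal{X}$ for all $s<\tau_{safe}$, so continuity gives $X_{\tau_{safe}}\in\overline{\mathcal{X}}$.
\item If $\tau_{safe}=0$, then $X_{\tau_{safe}}=x_0\in\mathcal{X}$. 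This contradicts the first bullet, so in fact $\tau_{safe}>0$. (Indeed $x_0\in\mathcal{X}$ with $\mathcal{X}$ open already forces $\tau_{safe}>0$ by continuity.)
\end{itemize}
Hence $X_{\tau_{safe}}\in\overline{\mathcal{X}}\setminus\mathcal{X}=\partial\mathcal{X}$.

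\textbf{Step 2: $X_{\tau_{safe}}\notin\mathcal{T}$.} Since $\mathcal{T}\subseteq\mathcal{X}$ and $\mathcal{X}\cap\partial\mathcal{X}=\emptyset$ for an open set, $1_{\mathcal{T}}(X_{\tau_{safe}})=0$. The integrand of the second integral is therefore identically zero, and the integral vanishes.

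Compactness of $\mathcal{T}$ is not actually needed here; only $\mathcal{T}\subseteq\mathcal{X}$ with $\mathcal{X}$ open is used. I expect the main obstacle to be the care needed in Step 1. One must argue that the frozen point lies on the boundary, and not in the interior where $\mathcal{T}$ may live. This uses path continuity of the strong solution (local Lipschitz coefficients, up to explosion, which cannot occur before exiting the bounded set $\mathcal{X}$) together with openness of $\mathcal{X}$. Measurability of $\tilde O_{\mathcal{T}}(H)$ is routine, since $\tilde X$ is continuous and adapted and $\mathcal{T}$ is measurable.
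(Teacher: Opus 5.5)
Your proof is correct and follows the same pathwise case split as the paper ($\tau_{safe}\ge H$ versus $\tau_{safe}<H$): the frozen point lies on $\partial\mathcal{X}$, which is disjoint from $\mathcal{T}\subseteq\mathcal{X}$, so nothing accumulates after $\tau_{safe}$. Your Step 1 supplies the continuity-and-openness argument for $X_{\tau_{safe}}\in\partial\mathcal{X}$ that the paper only asserts, and your remark that compactness of $\mathcal{T}$ is not needed is accurate.
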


\begin{proof}
We consider the behavior of the integrals for any sample path $\omega$:
\begin{enumerate}
    \item $\tau_{safe}(\omega) \geq H$:
    If the system remains safe throughout the horizon $[0,H]$, then $t \wedge \tau_{safe} = t$ for all $t \in [0, H]$. The stopped process $\tilde{X}_t$ coincides exactly with $X_t$, so the integrals are identical.
    \item $\tau_{safe}(\omega) < H$:
    For $t \leq \tau_{safe}$, $\tilde{X}_t = X_t$, so the accumulation is identical.
    For $t > \tau_{safe}$, the stopped process satisfies $\tilde{X}_t = X_{\tau_{safe}} \in \partial \mathcal{X}$. Since the target set $\mathcal{T}$ is strictly contained in the open safe set $\mathcal{X}$, it is disjoint from the boundary ($\mathcal{T} \cap \partial \mathcal{X} = \emptyset$). Consequently, $1_{\mathcal{T}}(\tilde{X}_t) = 0$ for all $t > \tau_{safe}$.
    Thus, $\tilde{O}_{\mathcal{T}}(H)$ stops accumulating exactly at $\tau_{safe}$, matching the definition of $O_{\mathcal{T}}(H)$ in \eqref{opt}.
\end{enumerate}
The proof is completed.
\end{proof}

\subsection{Upper Bounds via Dissipative Barriers}

In this subsection, we introduce \textit{dissipative barriers} to establish upper bounds on constrained occupation probability.

\noindent \textbf{Intuition for Theorem \ref{thm:sde_upper}:} 
Theorem \ref{thm:sde_upper} below models the barrier function $v(x)$ through a drift-based competition mechanism. Inside the target set $\mathcal{T}$, the generator condition imposes a dissipative constraint that offsets the growth induced by the exponential occupation-time weight. In particular, the inequality $\mathcal{L}v(x)+\lambda v(x)\le\beta$ ensures that the drift of the weighted process remains controlled even when the trajectory spends time inside $\mathcal{T}$. The constant $\beta$ represents the maximum allowable upward bias in the barrier dynamics across the state space. To formalize this mechanism, we analyze the time-weighted process  $Z_t = e^{\lambda \tilde{O}_{\mathcal{T}}(t)} v(\tilde{X}_t)$.  Rather than requiring $Z_t$ to be a strict supermartingale, the drift condition allows a bounded growth rate governed by $\beta$. By bounding the expectation $\mathbb{E}[Z_\tau\mid \tilde X_0=x_0]$ up to the horizon $H$ and applying Markov's inequality, we derive a tail bound on the probability that the cumulative occupation time exceeds a threshold $K$:  $\mathbb{P}\bigl(\tilde{O}_{\mathcal{T}}(H) \ge K\bigr) 
\le
e^{-\lambda K}
\left(
v(x_0)
+
\frac{\beta}{\lambda}
\bigl(e^{\lambda H}-1\bigr)
\right)$.  
This bound captures both the initial barrier value $v(x_0)$ and the accumulated drift over the time horizon, each attenuated by the exponential factor $e^{-\lambda K}$. The condition $v(x) \ge 1$ on $\mathcal{T}$ ensures that when the trajectory spends time inside the target set, the exponential weight contributes at least $e^{\lambda O_{\mathcal{T}}(t)}$. In particular, on the event that the occupation time reaches $K$, the weighted process is bounded below by $e^{\lambda K}$, enabling the expectation bound to yield a nontrivial probability bound.

\begin{theorem}[Finite-Horizon Upper Bounds] \label{thm:sde_upper}
Let $v \in C^2(\overline{\mathcal{X}})$ be a non-negative function. Let $\lambda > 0$ be the decay rate and $\beta \geq 0$ be the additive drift constant. Suppose $v$ satisfies:
\begin{enumerate}
    \item \textbf{Dissipative Drift Condition:}
    \begin{equation}
        \mathcal{L}v(x) \leq 
        \begin{cases} 
            -\lambda v(x) + \beta & \text{if } x\in \mathcal{T}, \\
            \beta & \text{if } x \in \mathcal{X}\setminus \mathcal{T}.
        \end{cases}
    \end{equation}
    \item \textbf{Target Positivity:} $v(x) \geq 1$ for all $x\in \mathcal{T}$.
      \item \textbf{Sink Condition:} $-\lambda v(x) + \beta \geq 0$ for all $x \in \partial \mathcal{X}$.
\end{enumerate}
Then, for any finite time horizon $H > 0$ and occupation threshold $K > 0$ with $K\leq H$, the probability of the constrained occupation time exceeding $K$ is bounded by:
\begin{equation*}
    \mathbb{P}(O_{\mathcal{T}}(H) \geq K) \leq e^{-\lambda K} \left( v(x_0) + \frac{\beta}{\lambda} (e^{\lambda H} - 1) \right).
\end{equation*}
\end{theorem}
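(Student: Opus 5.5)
We will work entirely with the stopped process $\tilde X_t$ and use Lemma~\ref{lemma:eot} to reduce the claim to bounding $\mathbb{P}(\tilde O_{\mathcal T}(H)\ge K)$. The object of study is the weighted process $Z_t=e^{\lambda \tilde O_{\mathcal T}(t)}v(\tilde X_t)$ for $t\in[0,H]$. Since $\tilde O_{\mathcal T}$ is absolutely continuous with derivative $1_{\mathcal T}(\tilde X_t)$, It\^o's product rule gives
\begin{equation*}
dZ_t = e^{\lambda \tilde O_{\mathcal T}(t)}\bigl(\lambda 1_{\mathcal T}(\tilde X_t)v(\tilde X_t)+\widetilde{\mathcal L}v(\tilde X_t)\bigr)dt + dM_t,
\end{equation*}
where $M_t=\int_0^{t\wedge\tau_{safe}} e^{\lambda\tilde O_{\mathcal T}(s)}\nabla v(X_s)^\top\sigma(X_s)\,dW_s$.

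The first step is to bound the drift pointwise by $\beta e^{\lambda \tilde O_{\mathcal T}(t)}$, splitting into three cases.
\begin{itemize}
\item For $\tilde X_t\in\mathcal T$, the dissipative drift condition gives $\lambda v+\mathcal L v\le\beta$.
\item For $\tilde X_t\in\mathcal X\setminus\mathcal T$, the indicator vanishes and $\mathcal L v\le\beta$.
\item For $\tilde X_t\in\partial\mathcal X$, we have $\widetilde{\mathcal L}v=0$ and $1_{\mathcal T}=0$ (as $\mathcal T\cap\partial\mathcal X=\emptyset$), so the drift is $0\le\beta$ because $\beta\ge0$.
\end{itemize}
The sink condition is not actually needed for this bound. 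It would matter only if one kept a $-\lambda v$ term on the boundary, and I will not rely on it beyond noting consistency. Since $\tilde O_{\mathcal T}(t)\le t$, the drift is further bounded by $\beta e^{\lambda t}$.

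The second step is to take expectations. Because $v\in C^2(\overline{\mathcal X})$ and $\overline{\mathcal X}$ is compact, $\nabla v$ and $\sigma$ are bounded on $\overline{\mathcal X}$. The weight is at most $e^{\lambda H}$. Hence the integrand of $M$ is square integrable on $[0,H]$, and Proposition~\ref{pro:martin} gives $\mathbb E[M_H]=0$. Integrating therefore yields
\begin{equation*}
\mathbb E[Z_H]\le v(x_0)+\int_0^H\beta e^{\lambda s}\,ds = v(x_0)+\frac{\beta}{\lambda}\bigl(e^{\lambda H}-1\bigr).
\end{equation*}

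The third step is the Markov argument. $Z_H\ge0$ because $v\ge0$. On the event $\{\tilde O_{\mathcal T}(H)\ge K\}$ we need $Z_H\ge e^{\lambda K}$, and this is the main obstacle. Target positivity only gives $v\ge1$ on $\mathcal T$, while $\tilde X_H$ need not lie in $\mathcal T$ at time $H$.

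To handle this, I would replace the deterministic time $H$ by the stopping time $\rho=\inf\{t:\tilde O_{\mathcal T}(t)\ge K\}\wedge H$, which is bounded.
\begin{itemize}
\item On the event $\{\tilde O_{\mathcal T}(H)\ge K\}$, the occupation time reaches $K$ at time $\rho$, and it can only increase while the path is in $\mathcal T$.
\item Since $\mathcal T$ is closed and paths are continuous, the point $\tilde X_\rho$ lies in $\mathcal T$. It is a limit of times in $\mathcal T$ at which the occupation time is increasing to $K$.
\item Hence $Z_\rho\ge e^{\lambda K}\cdot 1$ on that event.
\end{itemize}
The same It\^o/optional-stopping computation with $\rho\le H$ gives $\mathbb E[Z_\rho]\le v(x_0)+\frac{\beta}{\lambda}(e^{\lambda H}-1)$, using $\int_0^\rho\beta e^{\lambda s}ds\le\int_0^H\beta e^{\lambda s}ds$. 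Markov's inequality (Proposition~\ref{thm:markov}) applied to $Z_\rho$ then yields
\begin{equation*}
\mathbb P(\tilde O_{\mathcal T}(H)\ge K)\le \mathbb P(Z_\rho\ge e^{\lambda K})\le e^{-\lambda K}\Bigl(v(x_0)+\frac{\beta}{\lambda}(e^{\lambda H}-1)\Bigr).
\end{equation*}
Lemma~\ref{lemma:eot} then transfers this bound to $O_{\mathcal T}(H)$.

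The delicate points are two. The first is justifying that $\tilde X_\rho\in\mathcal T$ via closedness of $\mathcal T$ and path continuity. The second is applying It\^o's formula to $v$, which is only defined on $\overline{\mathcal X}$, up to the exit time; this is fine since the process is frozen afterward.
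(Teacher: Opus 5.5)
Your proposal is correct and follows essentially the same route as the paper. Both use the weighted process $Z_t=e^{\lambda\tilde O_{\mathcal T}(t)}v(\tilde X_t)$, a three-case drift bound $\le\beta e^{\lambda t}$ in which, as you note and as in the paper, the boundary case only needs $\beta\ge 0$ and not the sink condition, optional stopping at the bounded time $\tau_K\wedge H$, and Markov's inequality using $v\ge 1$ at the completion time. Your closedness-and-continuity argument for $\tilde X_\rho\in\mathcal T$ supplies a justification that the paper merely asserts.
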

\begin{proof}
    The proof is shown in Appendix.
\end{proof}

\subsection{Lower Bounds via Attractive Barriers}

In this subsection, we introduce two \textit{Attractive barriers} to lower bound the constrained occupation  probability that the system accumulates at least $K$ time units.

\paragraph{Attractive Barriers I}

\noindent \textbf{Intuition for Theorem~\ref{thm:sde_lower_global}.}
Theorem \ref{thm:sde_lower_global} establishes a lower bound on the constrained occupation probability by constructing the scorekeeping process $Z_t = v(\tilde{X}_t)e^{-\lambda I_{\text{out}}(t)}$,  which combines a barrier function $v$ with an exponential penalty for time spent outside the target. Here, $I_{out}(t) = \int_0^t 1_{\overline{\mathcal{X}}\setminus \mathcal{T}}(\tilde{X}_s) ds$ measures the time spent outside the target. The theorem imposes location-dependent drift conditions on $v$ that regulate the evolution of $Z_t$ through a controlled leakage rate. Inside the target $\mathcal{T}$ the generator of $v$ is bounded below by a constant $\beta \le 0$, allowing the score to decrease at most at a constant rate that represents admissible leakage. Outside $\mathcal{X}\setminus\mathcal{T}$ the drift must satisfy the stronger lower bound $\mathcal{L} v \ge \lambda v + \beta$, so that the generator-induced drift of $v(\tilde{X}_t)$ is sufficiently large to offset the decay caused by the exponential penalty, keeping the overall drift of $Z_t$ bounded below by $\beta e^{-\lambda I_{\mathrm{out}}(t)}$. On the boundary $\partial\mathcal X$ the sink condition $\lambda v+\beta\le0$ guarantees the frozen process does not violate the leakage budget. Taking expectations (the stochastic integral is a true martingale according to Proposition \ref{pro:martin}) yields that the expected score at any stopping time cannot fall below the initial value minus the maximal leakage $|\beta|H$. Finally, splitting the expectation according to success/failure events and using simple upper bounds on the score in each case (success score $\le 1$, failure score $\le \delta$) produces the claimed lower bound on the success probability after algebraic rearrangement.

\begin{theorem}[Attractive Barriers I] \label{thm:sde_lower_global}
Let $v \in C^2(\overline{\mathcal{X}})$ be bounded by $M$. Let $\lambda > 0$ and $\beta \leq 0$. Suppose $v$ satisfies:
\begin{enumerate}
    \item \textbf{Attractive Drift Condition:}
    \begin{equation}
        \mathcal{L}v(x) \geq 
        \begin{cases} 
            \beta & \text{if } x\in \mathcal{T}, \\
            \lambda v(x) + \beta & \text{if } x \in \mathcal{X}\setminus \mathcal{T}.
        \end{cases}
    \end{equation}
    \item \textbf{Bound on Target:} $v(x) \leq 1$ for all $x\in \mathcal{T}$.
    \item \textbf{Sink Condition:} $\lambda v(x) + \beta \leq 0$ for all $x \in \partial \mathcal{X}$.
\end{enumerate}
Then, for any finite horizon $H$ and threshold $K \leq  H$, the probability of success is bounded by:
\begin{equation*}
    \mathbb{P}(O_{\mathcal{T}}(H) \geq K) \geq \frac{v(x_0) - |\beta|H - \delta(H, K)}{1 - \delta(H, K)},
\end{equation*}
where $\delta(H, K) = M e^{-\lambda (H - K)}$, provided $1 - \delta(H, K)>0$.
\end{theorem}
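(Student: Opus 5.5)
We work on the stopped process $\tilde X_t$ of Definition~\ref{sp} and, by Lemma~\ref{lemma:eot}, bound $\mathbb{P}(\tilde O_{\mathcal T}(H)\ge K)$ instead. Set $I_{out}(t)=\int_0^t 1_{\overline{\mathcal X}\setminus\mathcal T}(\tilde X_s)\,ds$ and consider the scorekeeping process $Z_t=v(\tilde X_t)e^{-\lambda I_{out}(t)}$. Since $\tilde O_{\mathcal T}(t)+I_{out}(t)=t$ (the stopped process lives in $\overline{\mathcal X}$), we have $\tilde O_{\mathcal T}(H)<K$ if and only if $I_{out}(H)>H-K$.

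\emph{Drift of $Z_t$.} Because $I_{out}$ is absolutely continuous, Itô's product rule gives
\[
dZ_t=e^{-\lambda I_{out}(t)}\bigl(\widetilde{\mathcal L}v(\tilde X_t)-\lambda 1_{\overline{\mathcal X}\setminus\mathcal T}(\tilde X_t)\,v(\tilde X_t)\bigr)dt+dM_t,
\]
where $M_t=\int_0^t e^{-\lambda I_{out}(s)}1_{\{s<\tau_{safe}\}}\nabla v(\tilde X_s)^\top\sigma(\tilde X_s)\,dW_s$. We check the drift bracket against $\beta$ case by case:
\begin{itemize}
\item On $\mathcal T$ the bracket equals $\mathcal Lv\ge\beta$.
\item On $\mathcal X\setminus\mathcal T$ it equals $\mathcal Lv-\lambda v\ge\beta$.
\item On $\partial\mathcal X$ we have $\widetilde{\mathcal L}v=0$, so the bracket is $-\lambda v\ge\beta$ by the sink condition.
\end{itemize}
Hence the drift is at least $\beta e^{-\lambda I_{out}(t)}\ge\beta$, using $\beta\le0$ and $e^{-\lambda I_{out}}\le1$.

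\emph{Martingale term.} The integrand of $M_t$ is bounded, since $v\in C^2(\overline{\mathcal X})$, $\sigma$ is continuous on the compact set $\overline{\mathcal X}$, and the exponential factor is at most $1$. Proposition~\ref{pro:martin} therefore gives $\mathbb E[M_H]=0$. Taking expectations yields
\[
\mathbb E[Z_H]\ge v(x_0)+\beta H=v(x_0)-|\beta|H .
\]

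\emph{Splitting on success and failure.} Let $S=\{\tilde O_{\mathcal T}(H)\ge K\}$.
\begin{itemize}
\item On $S^c$, $I_{out}(H)>H-K$, so $Z_H\le M e^{-\lambda(H-K)}=\delta$.
\item On $S$ we need $Z_H\le1$. The exponential factor is at most $1$, but $v(\tilde X_H)$ need not lie in $\mathcal T$, so hypothesis~2 alone does not give this.
\end{itemize}
This is the main obstacle. I would first try to read off $v\le1$ on all of $\overline{\mathcal X}$, effectively taking $M\le1$ as the paper's ``success score $\le1$'' suggests. If that is unavailable, I would run the argument up to the stopping time $\tau=\inf\{t:\tilde O_{\mathcal T}(t)\ge K\}\wedge H$ instead of $H$. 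On $S$, $\tau$ is a time at which the occupation clock reaches $K$, and by continuity of $\tilde X$ and closedness of $\mathcal T$, $\tilde X_\tau\in\mathcal T$ (the last instant of accumulation), so $v(\tilde X_\tau)\le1$. On $S^c$, $\tau=H$ and the $\delta$ bound still holds. The drift and martingale argument applies unchanged at the bounded stopping time $\tau\le H$, again via Proposition~\ref{pro:martin}, and gives $\mathbb E[Z_\tau]\ge v(x_0)-|\beta|H$. The delicate point is that $\tilde X_\tau\in\mathcal T$ requires the occupation time to increase only while in $\mathcal T$; at the first time the integral reaches $K$, points of $\mathcal T$ must accumulate at $\tau$, and closedness of $\mathcal T$ then gives $\tilde X_\tau\in\mathcal T$.

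\emph{Conclusion.} Combining the two cases,
\[
v(x_0)-|\beta|H\le \mathbb E[Z_\tau]\le \mathbb P(S)+\delta\bigl(1-\mathbb P(S)\bigr),
\]
and rearranging with $1-\delta>0$ gives $\mathbb P(S)\ge (v(x_0)-|\beta|H-\delta)/(1-\delta)$. Lemma~\ref{lemma:eot} transfers this bound to $O_{\mathcal T}(H)$.
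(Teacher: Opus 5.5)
Your proposal is correct, and once you switch to the stopping time $\tau=\tau_K\wedge H$ it is essentially the paper's proof: the same process $Z_t=v(\tilde X_t)e^{-\lambda I_{out}(t)}$, the same three-case drift bound $\mu_Z\ge\beta e^{-\lambda I_{out}}\ge\beta$, and the same success/failure split giving $v(x_0)-|\beta|H\le \mathbb P(S)+\delta(1-\mathbb P(S))$. Drop the first option you float (assuming $v\le 1$ on all of $\overline{\mathcal X}$ and stopping at $H$), since it adds a hypothesis; your argument that $\tilde X_{\tau_K}\in\mathcal T$, via target points accumulating at $\tau_K$ plus closedness of $\mathcal T$ and path continuity, is a step the paper merely asserts.
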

\begin{proof}
    The proof is shown in Appendix.
\end{proof}

In Theorem \ref{thm:sde_lower_global}, we restrict $\beta \leq 0$, as a strictly positive drift ($\beta > 0$) is structurally inadmissible in this framework. Such a positive drift would imply that the expected value of the barrier increases indefinitely ($\mathbb{E}[v(\tilde{X}_t)] \to \infty$), eventually exceeding any finite bound, which creates a mathematical contradiction with the boundedness of $v(x)$ on the compact set $\overline{\mathcal{X}}$ (since the stopped process cannot escape $\overline{\mathcal{X}}$).

\noindent\paragraph{Attractive Barriers II}
\noindent\textbf{Intuition for Theorem~\ref{thm:sde_lower_weighted}.} Theorem \ref{thm:sde_lower_weighted} below establishes lower bounds on the constrained occupation probability for systems exhibiting strict attraction dynamics. The proof employs a bidirectional weighting mechanism that balances the behavior inside the target set $\mathcal{T}$ against the restoring effect from the exterior. This is achieved by constructing the process $Z_t = v(\tilde{X}_t) e^{\lambda(2\tilde{O}_{\mathcal{T}}(t) - t)}$, which increases when the system remains inside the target (accumulating reward) and decreases when it moves outside (incurring penalty). The theorem imposes location-dependent drift conditions on the barrier function $v$ so that $Z_t$ admits a uniform lower bound on its drift with offset $\beta$:
\begin{itemize}
    \item Inside the target ($x \in \mathcal{T}$): $\mathcal{L}v(x) \geq -\lambda v(x) + \beta$
    \item Outside the target ($x \in \mathcal{X}\setminus\mathcal{T}$): $\mathcal{L}v(x) \geq \lambda v(x) + \beta$
    \item On the boundary ($x \in \partial\mathcal{X}$): $\lambda v(x) + \beta \leq 0$
\end{itemize}
These conditions ensure that the exponential growth inside the target and the decay outside balance in such a way that the weighted process $Z_t$ maintains a controlled drift. 
    A positive offset $\beta$ captures systems with strong inward drift toward the target. The certificate can then exploit this positive drift contribution to obtain tighter probability bounds (the bound includes $\Gamma$ which is positive and increases with $\beta$). By analyzing the expected value of $Z_t$ up to the stopping time $\tau=\tau_K\wedge H$, the proof derives a rigorous lower bound on the constrained occupation probability.

\begin{theorem}[Attractive Barriers II] \label{thm:sde_lower_weighted}

Let $v \in C^2(\overline{\mathcal{X}})$ be bounded by $M$. Let $\lambda > 0$ be the decay rate and $\beta \geq 0$ be the drift constant. Suppose $v$ satisfies:

\begin{enumerate}
\item \textbf{Attractive Drift Condition:}
\begin{equation}
\mathcal{L}v(x) \geq
\begin{cases}
-\lambda v(x) + \beta & \text{if } x\in \mathcal{T}, \\
\lambda v(x) + \beta & \text{if } x \in \mathcal{X}\setminus \mathcal{T}.
\end{cases}
\end{equation}

\item \textbf{Boundary Condition:} $v(x) \leq 1$ on $\mathcal{T}$.
\item \textbf{Sink Condition}: $\lambda v(x) + \beta \leq 0$ on $\partial \mathcal{X}$.
\end{enumerate}

Then, for any finite horizon $H \geq  K$ satisfying $e^{\lambda K} - \delta_W(H, K)>0$, 
the constrained occupation probability  is bounded by:
\begin{equation*}
\mathbb{P}(O_{\mathcal{T}}(H) \geq K) \geq \frac{v(x_0) + \Gamma(H, K, \beta) - \delta_W(H, K)}{e^{\lambda K} - \delta_W(H, K)}.
\end{equation*}
Here, $\delta_W(H, K) = M e^{\lambda (2K - H)}$ is the horizon penalty and $\Gamma(H, K, \beta)=\frac{\beta}{\lambda}(1-e^{-\lambda K})$ is the drift term.
\end{theorem}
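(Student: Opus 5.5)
The plan is to run the two-speed clock argument on the stopped process of Definition~\ref{sp}. Using Lemma~\ref{lemma:eot}, we may work with $\tilde{O}_{\mathcal{T}}$ instead of $O_{\mathcal{T}}$. Define
\begin{equation*}
Z_t = v(\tilde X_t)\, e^{\lambda(2\tilde O_{\mathcal T}(t)-t)},
\end{equation*}
together with the stopping times $\tau_K=\inf\{t\ge 0:\tilde O_{\mathcal T}(t)\ge K\}$ and $\tau=\tau_K\wedge H$, which is bounded by $H$.

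\textbf{Step 1: the drift of $Z_t$.} The exponent $\lambda(2\tilde O_{\mathcal T}(t)-t)$ has finite variation, with derivative $\lambda(2\cdot 1_{\mathcal T}(\tilde X_t)-1)$. The It\^o product rule therefore gives
\begin{equation*}
dZ_t = e^{\lambda(2\tilde O_{\mathcal T}(t)-t)}\Bigl(\widetilde{\mathcal L}v(\tilde X_t)+\lambda\bigl(2\cdot 1_{\mathcal T}(\tilde X_t)-1\bigr)v(\tilde X_t)\Bigr)dt + dM_t,
\end{equation*}
where $M_t$ is a stochastic integral. We then check the drift region by region, using the form of $\widetilde{\mathcal L}$ stated after Definition~\ref{sp}.
\begin{itemize}
\item On $\mathcal T$, the bracket equals $\mathcal L v+\lambda v$, which is $\ge\beta$ by the drift condition.
\item On $\mathcal X\setminus\mathcal T$, it equals $\mathcal L v-\lambda v$, which is $\ge\beta$.
\item On $\partial\mathcal X$, we have $\widetilde{\mathcal L}v=0$ and $\partial\mathcal X\cap\mathcal T=\emptyset$, so the bracket is $-\lambda v$. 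This is $\ge \beta$ exactly by the sink condition.
\end{itemize}
Hence the drift is at least $\beta e^{\lambda(2\tilde O_{\mathcal T}(s)-s)}\ge \beta e^{-\lambda s}$, using $\beta\ge 0$ and $\tilde O_{\mathcal T}\ge 0$.

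For the martingale part, the integrand of $M_t$ is $e^{\lambda(2\tilde O-t)}\nabla v^\top\sigma$ up to time $\tau_{safe}$. It is bounded on $[0,H]$, because $v\in C^2(\overline{\mathcal X})$, $\sigma$ is continuous on the compact set $\overline{\mathcal X}$, and the exponential factor is at most $e^{\lambda H}$. Proposition~\ref{pro:martin} then gives $\mathbb E[M_\tau]=0$. Since $\tilde O_{\mathcal T}(t)\le t$, we have $\tau_K\ge K$, and with $H\ge K$ this gives $\tau\ge K$. Therefore
\begin{equation*}
\mathbb E[Z_\tau]\ \ge\ v(x_0)+\beta\int_0^{K}e^{-\lambda s}\,ds = v(x_0)+\Gamma(H,K,\beta).
\end{equation*}

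\textbf{Step 2: splitting on success and failure.} Let $S=\{\tau_K\le H\}$, which coincides with $\{\tilde O_{\mathcal T}(H)\ge K\}$.
\begin{itemize}
\item On $S$, we have $\tilde O_{\mathcal T}(\tau)=K$ and $\tau\ge K$, so the weight is $e^{\lambda(2K-\tau)}\le e^{\lambda K}$.
\item We also need $\tilde X_{\tau_K}\in\mathcal T$. The occupation time grows only while the path lies in $\mathcal T$, so every left neighbourhood of $\tau_K$ contains times at which $\tilde X\in\mathcal T$. Path continuity and closedness of the compact set $\mathcal T$ then place $\tilde X_{\tau_K}$ in $\mathcal T$. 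The boundary condition gives $v(\tilde X_{\tau_K})\le 1$, so $Z_\tau\le e^{\lambda K}$ on $S$.
\item On $S^c$, we have $\tau=H$ and $\tilde O_{\mathcal T}(H)<K$, so $Z_\tau\le M e^{\lambda(2K-H)}=\delta_W(H,K)$.
\end{itemize}
Combining these bounds,
\begin{equation*}
v(x_0)+\Gamma\le e^{\lambda K}\,\mathbb P(S)+\delta_W\bigl(1-\mathbb P(S)\bigr).
\end{equation*}
Rearranging, which is valid under the assumption $e^{\lambda K}-\delta_W>0$, gives the claimed bound. Lemma~\ref{lemma:eot} then transfers it to $O_{\mathcal T}(H)$.

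\textbf{Main obstacle.} The delicate step is justifying the It\^o/Dynkin computation for the stopped process across the freezing time, so that the drift is genuinely $\widetilde{\mathcal L}v$ and equals $0$ on $\partial\mathcal X$. I would handle this by writing $Z_\tau$ as the sum of the evolution on $[0,\tau\wedge\tau_{safe}]$, where It\^o's formula for the original SDE applies, and the frozen phase on $[\tau_{safe},\tau]$. In the frozen phase only the exponential factor changes, with rate $-\lambda v(X_{\tau_{safe}})\ge\beta$ by the sink condition. A secondary point to check carefully is the measurability and continuity argument showing $\tilde X_{\tau_K}\in\mathcal T$.
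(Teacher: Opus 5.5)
Your proposal is correct and follows the paper's proof essentially step for step. It uses the same two-speed clock process $Z_t=v(\tilde X_t)e^{\lambda(2\tilde O_{\mathcal T}(t)-t)}$, the same three-region drift bound $\mu_Z\ge\beta e^{Y_t}\ge\beta e^{-\lambda s}$, the pathwise estimate via $\tau\ge K$, and the same success/failure split into $e^{\lambda K}$ and $\delta_W(H,K)$; in addition, you justify $\tau\ge K$, $\tilde X_{\tau_K}\in\mathcal T$ (via path continuity and closedness of $\mathcal T$), and the frozen phase after $\tau_{safe}$, which the paper only asserts.
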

\begin{proof}
    The proof is shown in Appendix.
\end{proof}

Theorem~\ref{thm:sde_lower_global} and Theorem~\ref{thm:sde_lower_weighted} should be viewed as complementary results; neither strictly dominates the other. Both provide rigorous lower bounds for the constrained occupation probability, but they are tailored to different dynamical regimes through their distinct weighting structures. Theorem~\ref{thm:sde_lower_global} is restricted to the offset parameter $\beta \le 0$ and adopts a budget-style accounting of occupation time. The case $\beta=0$ corresponds to a neutral baseline in which the score process does not lose mass in expectation. In contrast, $\beta < 0$ allows for controlled leakage of the certificate’s value over time. This reflects regimes where the system may experience unfavorable drift even within the target—potentially causing the trajectory to leave the set intermittently or simply flow through it—yet can still accumulate the required total duration over the finite horizon (as demonstrated in Example \ref{example2}). In contrast, Theorem~\ref{thm:sde_lower_weighted} employs a bidirectional weighting $e^{\lambda(2\tilde{O}_{\mathcal{T}}(t)-t)}$ that increases while the trajectory remains inside the target and decreases while it stays outside. This allows the framework to explicitly exploit positive drift ($\beta > 0$) that may arise in strongly attractive systems. Thus, Theorem~\ref{thm:sde_lower_weighted} could yield tighter bounds when such attractive dynamics are present, as the bidirectional weighting prevents the certificate value from escaping to infinity even when $\beta$ is positive (illustrated in Example \ref{example1}).

\section{Examples}
\label{sec:examples}
In this section, we demonstrate the effectiveness of the proposed barrier framework through numerical case studies on stochastic polynomial systems. To synthesize the barrier functions $v(x)$ and verify the drift conditions in Theorems \ref{thm:sde_upper}--\ref{thm:sde_lower_weighted}, we use SOS  programming, which converts the sufficient conditions for occupation-time bounds into SDPs. All SDPs are solved using Mosek~\cite{aps2019mosek}. As a benchmark, we estimate the constrained occupation probabilities using $10^5$ Monte Carlo (MC) simulations based on the Euler--Maruyama method with step size $\Delta t = 2 \times 10^{-3}$.

To avoid bilinearity arising from jointly optimizing the parameters $(\lambda, M)$ and the barrier function $v$, we adopt a pragmatic approach. Specifically, we perform a grid search over the scalar decay rate $\lambda$ and the global bound $M$. For each fixed pair $(\lambda, M)$, the synthesis of a suitable barrier function $v(\bm{x})$ reduces to a convex optimization problem. In addition, unless otherwise specified, we use polynomial barrier functions of degree $d$, containing all monomials with total degree less than or equal to $d$.

\begin{example}
\label{example1}
We consider a polynomial SDE on the domain $\mathcal{X} = [-1, 1]$ with multiplicative noise: 
\begin{equation}
    dX_t = (15X_t^3 - 5X_t)dt + X_tdW_t
\end{equation}
The system admits a locally stable equilibrium at the origin. This behavior is visualized in Figure \ref{fig:trajectories1}. We verify the probability of accumulating a service time of $K = 2.0$ within a narrow target region $\mathcal{T} = [-0.1, 0.1]$ over a horizon $H = 10.0$, starting from $x_0 = 0.5$.

We evaluate polynomial barrier certificates of degree $d=12$ across two different decay rates ($\lambda \in \{0.1, 10^{-5}\}$). Table \ref{tab:barrier_comparison} summarizes the computed bounds against the empirical MC probability of $0.7564$.

\noindent\textbf{Discussion of Results (Table \ref{tab:barrier_comparison}):}
\begin{itemize}
    \item \textbf{Sensitivity to Decay Rate:} The choice of $\lambda$ dictates the performance of different barrier formulations. When $\lambda = 10^{-5}$, Theorem \ref{thm:sde_lower_weighted} leverages the strict attraction of the system to yield a lower bound of $0.5054$. Conversely, Theorem 2 fails at this decay rate, yielding a trivial bound of 0.
    \item \textbf{Theorem \ref{thm:sde_lower_global} at High $\lambda$:} When $\lambda = 0.1$, Theorem 2 recovers a bound of $0.3595$, outperforming the variations of Theorem \ref{thm:sde_lower_weighted}, which fall to $\approx 0.159$.
    \item \textbf{Upper Bound Analysis:} The upper bound from Theorem \ref{thm:sde_upper} tightens to $0.7924$ at $\lambda = 10^{-5}$, correctly identifying that stochasticity drives the system out of the safe region in approximately $24\%$ of trials.
\end{itemize}

\begin{figure}[h]
\centering \includegraphics[width=0.3\textwidth]{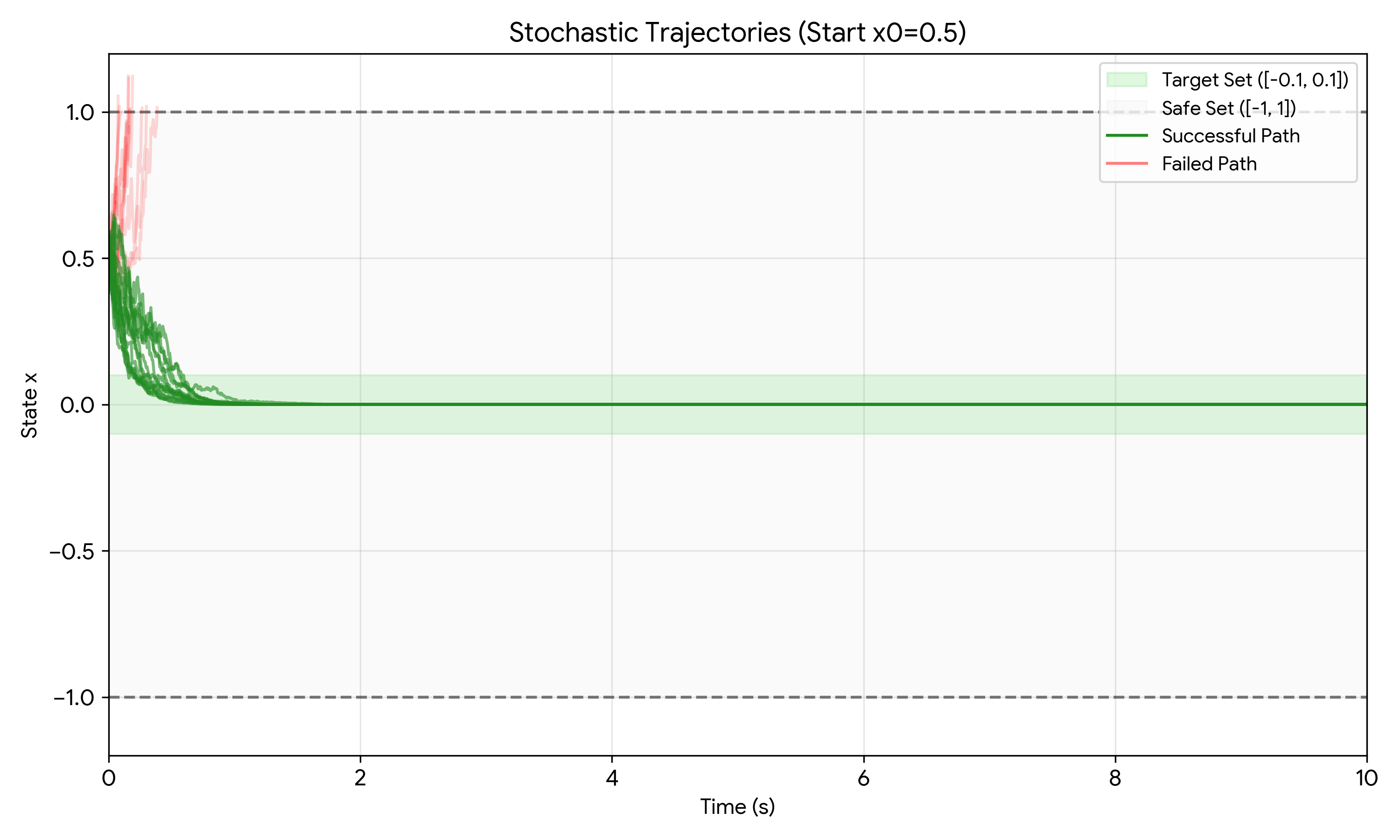} \caption{Sample trajectories ($H=10$) of the system starting from $x_0=0.5$. Green paths successfully accumulate $K=2.0$ in the target region (shaded green) within the horizon $H=10$. Red paths fail by exiting the safe set (shaded gray) prematurely.} \label{fig:trajectories1}
\end{figure}

\begin{table*}[h]
\centering
\caption{Optimization Parameters and Computational Results for Example \ref{example1} ($H=10, K=2$)}
\label{tab:barrier_comparison}

\begin{tabular}{c c c c c c c}
\hline
\textbf{Degree} &\textbf{Bound ($M$)}& $\lambda$& \textbf{Upper Bound} & \textbf{Lower Bound} & \textbf{Lower Bound} \\
&
&
& (Theorem \ref{thm:sde_upper}) 
& (Theorem \ref{thm:sde_lower_global}) 
& (Theorem \ref{thm:sde_lower_weighted})\\ 
\hline

\multirow{2}{*}{12} 
&\multirow{2}{*}{1} 
& 0.1
& 1
& 0.3595 
& 0.1595\\

&
& $10^{-5}$
& 0.7924
& 0
& 0.5054\\

\hline
\textbf{MC} & \multicolumn{5}{c}{\textbf{0.7564}} \\
\hline
\end{tabular}
\end{table*}

\begin{table*}[h]
    \centering
    \caption{Optimization Parameters and Computational Results for Example \ref{example2} ($d=10$)}
    \begin{tabular}{l c c l }
        \hline
        \textbf{Method} & \textbf{$\lambda$} & \textbf{Bound ($M$)} & \textbf{Certified Probability} \\
        \hline
        Theorem \ref{thm:sde_upper} & $0.5$ & --  & $\le \mathbf{0.7162}$ \\
        \hline
        Theorem \ref{thm:sde_lower_global} & $10^{-5}$ & $1$ & $\ge \mathbf{0.6570}$ \\
        \hline
        Theorem \ref{thm:sde_lower_weighted}  & $\{10^{-5}, 10^{-4}, 10^{-3}, 10^{-2}, 10^{-1},1\}$ & $1$ & \textit{Infeasible} \\
        \hline
\textbf{MC} & \multicolumn{3}{c}{\textbf{0.6919}} \\
\hline
    \end{tabular}
    \label{tab:parameters}
\end{table*}

\end{example}

\begin{example}
\label{example2}
  We analyze the system: 
\begin{equation}
    dX_t = (X_t^3 - 5X_t)dt + X_tdW_t.
\end{equation}
Trajectories start at $x_0 = 0.9$, and we bound the probability of accumulating at least $K = 0.1$ seconds in $\mathcal{T} = [0.1, 0.5]$ over $H = 5.0$. 
The behavior is visualized in Figure \ref{fig:trajectories}. We use degree $d=10$ polynomials. Table \ref{tab:parameters} summarizes the computed bounds.

\textbf{Discussion of Results (Table \ref{tab:parameters}):}
\begin{itemize}
    \item \textbf{Tight Bracketing of Probability:} The proposed framework successfully brackets the true probability within a tight interval: $[0.6570, 0.7162]$. The lower bound of $0.6570$ is obtained via Theorem \ref{thm:sde_lower_global} (Attractive Barriers I), while the upper bound of $0.7162$ is derived from Theorem \ref{thm:sde_upper}. This tightly encompasses the MC success probability of $0.6916$.
    \item \textbf{Infeasibility of Theorem \ref{thm:sde_lower_weighted}:} For this problem, the SOS constraints used to implement Theorem \ref{thm:sde_lower_weighted} were infeasible across the tested parameter grids. 
\end{itemize}

\begin{figure}[htbp]
    \centering
    \includegraphics[width=0.3\textwidth]{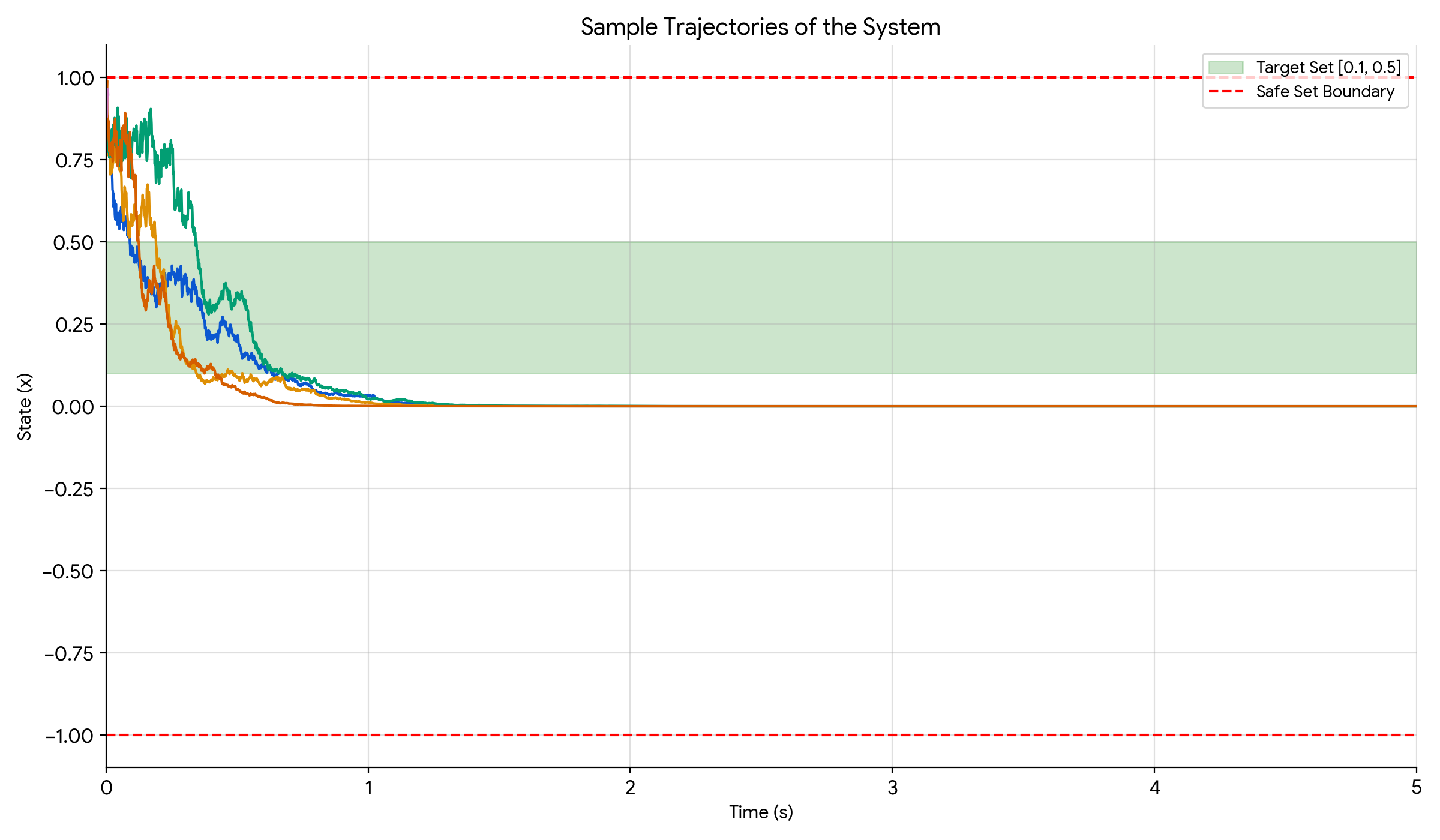}
    \caption{Sample trajectories ($H=5$) of the system starting at \(x_0=0.9\). The paths are driven rapidly towards the origin by the drift and then linger within \(\mathcal{T}=[0.1, 0.5]\) (green shaded region) due to the vanishing diffusion term, illustrating the high probability of satisfying the occupation time constraint.}
    \label{fig:trajectories}
\end{figure}

\end{example}

\section{Conclusion}
\label{sec:conclusion}
This paper presented a unified barrier certificate framework for the quantitative verification of finite-time constrained occupation measures in continuous-time stochastic systems modeled by SDEs. By shifting the verification focus from instantaneous reachability to cumulative service duration, we addressed a critical requirement for autonomous systems operating under uncertainty. We introduced three distinct classes of barrier certificates for bounding constrained occupation probabilities. Numerical validation demonstrated the framework’s efficacy and versatility.

Future work will focus on automating the selection of key parameters (e.g., $\lambda$, $M$), improving the feasibility of the more complex barrier constructions, and analyzing the conservativeness of the proposed conditions both theoretically and through numerical experiments.

\bibliographystyle{splncs04}
\bibliography{ref}

\section*{Appendix}

\paragraph{\textbf{The proof of Theorem \ref{thm:sde_upper}}}

\begin{proof}
We analyze the stopped process $\tilde{X}_t = X_{t \wedge \tau_{safe}}$ and its corresponding occupation time $\tilde{O}_{\mathcal{T}}(t) = \int_0^t 1_{\mathcal{T}}(\tilde{X}_s) ds$. By Lemma \ref{lemma:eot}, $\mathbb{P}(O_{\mathcal{T}}(H) \geq K) = \mathbb{P}(\tilde{O}_{\mathcal{T}}(H) \geq K)$.

\textbf{Step 1: Construction of the Scaled Process.}
We define the time-dependent stochastic process $Z_t$ on the stopped trajectory: $Z_t = e^{\lambda \tilde{O}_{\mathcal{T}}(t)} v(\tilde{X}_t)$.

We differentiate $Z_t$ using It\^{o}'s Product Rule. For $t < \tau_{safe}$, the dynamics follow the original SDE. For $t \geq \tau_{safe}$, the state is frozen at $\partial \mathcal{X}$, implying $dZ_t = 0$. Thus, we analyze the differential up to the safety exit:
\begin{equation*}
    dZ_t = e^{\lambda \tilde{O}_{\mathcal{T}}(t)} \left[ 
    \begin{aligned}
    \left( \mathcal{L}v(\tilde{X}_t) + \lambda 1_{\mathcal{T}}(\tilde{X}_t) v(\tilde{X}_t) \right) dt \\
    + \nabla v(\tilde{X}_t) \sigma(\tilde{X}_t) dW_t 
    \end{aligned}
        \right].
\end{equation*}

\textbf{Step 2: Drift Analysis.}
We analyze the drift term $\mu_Z(t) = e^{\lambda \tilde{O}_{\mathcal{T}}(t)} (\mathcal{L}v(\tilde{X}_t) + \lambda 1_{\mathcal{T}}(\tilde{X}_t) v(\tilde{X}_t))$:
\begin{itemize}
    \item If $\tilde{X}_t \in \mathcal{T}$: $1_{\mathcal{T}}=1$. By condition (1), $\mathcal{L}v \leq -\lambda v + \beta$, so $\mu_Z(t) \leq \beta e^{\lambda \tilde{O}_{\mathcal{T}}(t)}$.
    \item If $\tilde{X}_t \in \mathcal{X}\setminus \mathcal{T}$: $1_{\mathcal{T}}=0$. By condition (1), $\mathcal{L}v \leq \beta$, so $\mu_Z(t) \leq \beta e^{\lambda \tilde{O}_{\mathcal{T}}(t)}$.
    \item If $\tilde{X}_t \in \partial \mathcal{X}$ (stopped): The drift is identically zero, which satisfies $\beta e^{\lambda t}$ since $\beta\geq 0$.
\end{itemize}
In all cases, since $\tilde{O}_{\mathcal{T}}(t) = \int_0^t 1_{\mathcal{T}}(\tilde{X}_s) ds\leq t$, we obtain that $e^{\lambda \tilde{O}_{\mathcal{T}}(t)} \leq e^{\lambda t}$. Further, the drift is upper-bounded by $\beta e^{\lambda \tilde{O}_{\mathcal{T}}(t)} \leq \beta e^{\lambda t}$.

\textbf{Step 3: Integral Form and Expectation.}
Let $\tau_K=\inf\{t:\tilde O_{\mathcal{T}}(t)=K\}$ be the completion time and define the bounded stopping time $\tau=\tau_K\wedge H$.

Integrating the SDE for $Z_t$ up to time $\tau$ gives
\[
Z_\tau
=
v(x_0)
+
\int_0^\tau \mu_Z(s)ds
+
\int_0^\tau
e^{\lambda \tilde O_{\mathcal{T}}(s)}
\nabla v(\tilde X_s)\sigma(\tilde X_s)dW_s .
\]

Since $\nabla v$ and $\sigma$ are bounded on the compact domain $\overline{\mathcal{X}}$, and $\tau \le H$, the stochastic integral term is square-integrable. Therefore, it has zero expectation and is a true martingale according to Proposition \ref{pro:martin}. Taking expectations yields $\mathbb{E}[Z_\tau \mid \tilde X_0=x_0]
=
v(x_0)
+
\mathbb{E}\!\left[\int_0^\tau \mu_Z(s)ds \mid \tilde X_0=x_0\right]$.

Using the drift bound and extending the integral to $H$, we have $\mathbb{E}[Z_\tau\mid \tilde X_0=x_0]
\le
v(x_0)
+
\int_0^H
\beta e^{\lambda s}ds$. Evaluating the integral gives $\mathbb{E}[Z_\tau\mid \tilde X_0=x_0]
\le
v(x_0)
+
\frac{\beta}{\lambda}(e^{\lambda H}-1)$.

\textbf{Step 4: Markov's Inequality.}
Consider the success event $S = \{ \tilde{O}_{\mathcal{T}}(H) \geq K \}$. On this event, $\tau = \tau_K \leq H$.
This implies $\tilde{O}_{\mathcal{T}}(\tau) = K$ and $\tilde{X}_{\tau} \in \mathcal{T}$. By condition (2), $v(\tilde{X}_{\tau}) \geq 1$.
Thus, $Z_{\tau} = e^{\lambda K} v(\tilde{X}_{\tau}) \geq e^{\lambda K}$ on $S$.
Applying Markov's inequality, we have
\begin{equation*}
    \mathbb{P}(S) \leq \frac{\mathbb{E}[Z_{\tau}]}{e^{\lambda K}} \leq e^{-\lambda K} \left( v(x_0) + \frac{\beta}{\lambda}(e^{\lambda H} - 1) \right).
\end{equation*}
The proof is completed.
\end{proof}

\paragraph{\textbf{The proof of Theorem \ref{thm:sde_lower_global}}}

\begin{proof}
We analyze the stopped process $\tilde{X}_t = X_{t \wedge \tau_{safe}}$ and its corresponding occupation time $\tilde{O}_{\mathcal{T}}(t) = \int_0^t 1_{\mathcal{T}}(\tilde{X}_s) ds$. By Lemma \ref{lemma:eot}, $\mathbb{P}(O_{\mathcal{T}}(H) \geq K) = \mathbb{P}(\tilde{O}_{\mathcal{T}}(H) \geq K)$.

 \textbf{Step 1: Construction of the Scaled Stopped Process.}
We define the stochastic process $Z_t$ on the stopped trajectory $\tilde{X}_t = X_{t \wedge \tau_{safe}}$: $Z_t = v(\tilde{X}_t) e^{-\lambda I_{out}(t)}$,
where $I_{out}(t) = \int_0^t 1_{\overline{\mathcal{X}}\setminus \mathcal{T}}(\tilde{X}_s) ds$ measures the time spent outside the target. 

We differentiate $Z_t$ using It\^{o}'s Product Rule. For $t < \tau_{safe}$, the dynamics follow the original SDE. For $t \geq \tau_{safe}$, the state is frozen at $\partial \mathcal{X}$, implying $dZ_t = 0$. Thus, we analyze the differential up to the safety exit:
\begin{equation*}
    dZ_t = e^{-\lambda I_{out}(t)} \left[ 
    \begin{aligned}
    \left( \mathcal{L}v(\tilde{X}_t) - \lambda 1_{\overline{\mathcal{X}}\setminus \mathcal{T}}(\tilde{X}_t) v(\tilde{X}_t) \right) dt \\
    + \nabla v(\tilde{X}_t) \sigma(\tilde{X}_t) dW_t 
    \end{aligned}
        \right].
\end{equation*}

Note that if the system exits the safe set, $\tilde{X}_t$ freezes at the boundary $\partial \mathcal{X}$ (which is disjoint from $\mathcal{T}$), causing $I_{out}(t)$ to grow at rate 1 indefinitely.

\textbf{Step 2: Drift Analysis.}
We analyze the drift $\mu_Z(t)=e^{-\lambda I_{out}(t)}
    \left( \mathcal{L}v(\tilde{X}_t) - \lambda 1_{\overline{\mathcal{X}}\setminus \mathcal{T}}(\tilde{X}_t) v(\tilde{X}_t) \right)$ of the process $Z_t$ over the entire domain $\overline{\mathcal{X}}$. Applying It\^{o}'s formula for the stopped process:
\begin{itemize}
    \item \textbf{Case 1 (Interior, Inside Target $x\in \mathcal{T}$):} 
    Here $1_{\mathcal{X}\setminus \mathcal{T}} = 0$. The dynamics follow the original SDE. Thus, $\mu_Z(t) = e^{-\lambda I_{out}(t)} \mathcal{L}v(\tilde{X}_t)$. By the Attractive Drift Condition, $\mathcal{L}v(x) \geq \beta$. Thus, $\mu_Z(t) \geq \beta e^{-\lambda I_{out}(t)}$.

    \item \textbf{Case 2 (Interior, Outside Target $x \in \mathcal{X}\setminus \mathcal{T}$):} 
    Here $1_{\mathcal{X}\setminus \mathcal{T}} = 1$. The dynamics follow the original SDE, i.e., $\mu_Z(t) = e^{-\lambda I_{out}(t)} (\mathcal{L}v(\tilde{X}_t) - \lambda v(\tilde{X}_t))$.     By the condition $\mathcal{L}v(x) \geq \lambda v(x) + \beta$, we have $\mathcal{L}v(x) - \lambda v(x) \geq \beta$. Thus, $\mu_Z(t) \geq \beta e^{-\lambda I_{out}(t)}$.

    \item \textbf{Case 3 (Boundary $x \in \partial \mathcal{X}$):} 
    Here $\tilde{X}_t$ is frozen, so $d\tilde{X}_t = 0$, implying the generator term vanishes ($\mathcal{L}v(\tilde{X}_t) = 0$). However, the time counter continues to grow ($dI_{out} = 1$). The differential is driven solely by the exponential decay:
    \begin{equation*}
        dZ_t = v(\tilde{X}_t) \left( -\lambda e^{-\lambda I_{out}(t)} dt \right) = -\lambda v(\tilde{X}_t) e^{-\lambda I_{out}(t)} dt.
    \end{equation*}
    The drift is $\mu_Z = -\lambda v(\tilde{X}_t) e^{-\lambda I_{out}(t)}$. To maintain the global lower bound $\mu_Z \geq \beta e^{-\lambda I_{out}(t)}$, we require $-\lambda v(x) \geq \beta$, which rearranges to $\lambda v(x) + \beta \leq 0$.
    This is exactly the \textbf{Sink Condition}.
\end{itemize}
Thus, $\mu_Z(t) \geq \beta e^{-\lambda I_{out}(t)}$ holds globally for all $t \geq 0$.

\textbf{Step 3: Integration.}
Let $\tau_K=\inf\{t:\tilde O_{\mathcal{T}}(t)=K\}$ be the completion time and define the bounded stopping time $\tau=\tau_K\wedge H$.

 We integrate the SDE from $0$ to $\tau$. Since $\nabla v$ and $\sigma$ are bounded on the compact domain $\overline{\mathcal{X}}$, and $\tau \le H$, the stochastic integral term is square-integrable. Therefore, it has zero expectation and is a true martingale according to Proposition \ref{pro:martin}. Taking expectations yields $\mathbb{E}[Z_{\tau}\mid \tilde X_0=x_0] = v(x_0) + \mathbb{E}\left[ \int_0^{\tau} \mu_Z(s) ds \mid  \tilde X_0=x_0\right] \geq v(x_0) + \mathbb{E}\left[ \int_0^{\tau} \beta e^{-\lambda I_{out}(s)} ds \mid \tilde X_0=x_0\right]$. To obtain a valid lower bound given $\beta \leq 0$, we maximize the magnitude of the subtracted integral. Using $I_{out}(s) \geq 0$ and extending the limit to $H$:
\begin{equation} 
\label{eq:lower_bound_exp_final}
    \mathbb{E}[Z_{\tau}\mid \tilde X_0=x_0] \geq v(x_0) + \int_0^\tau \beta ds \geq v(x_0) -|\beta|H.
\end{equation}

\textbf{Step 4: Outcome Decomposition.}
We decompose the outcome space into Success ($S = \{\tilde{O}_{\mathcal{T}}(H) \geq K\}$) and No Success ($S^c$).
\begin{enumerate}
    \item \textbf{On Success ($S$):} $\tau = \tau_K \leq H$. Thus $\tilde{O}_{\mathcal{T}}(\tau)=K$ and $\tilde{X}_\tau \in \mathcal{T}$. By the target bound, $v \leq 1$, so $Z_\tau \leq 1$.
    \item \textbf{On No Success ($S^c$):} The horizon $H$ is reached with $\tilde{O}_{\mathcal{T}}(H) < K$.
    Whether the system is in $\mathcal{X}\setminus \mathcal{T}$ or frozen at the boundary $\partial \mathcal{X}$, the global bound $|v| \leq M$ applies. The time spent outside is $I_{out} = H - \tilde{O}_{\mathcal{T}} > H - K$.
    Thus, $Z_\tau \leq M e^{-\lambda (H - K)} = \delta(H,K)$.
\end{enumerate}
Combining these, we have $\mathbb{E}[Z_{\tau} \mid \tilde X_0=x_0] \leq \mathbb{P}(S) \cdot 1 + (1 - \mathbb{P}(S)) \cdot \delta = \mathbb{P}(S)(1 - \delta) + \delta(H,K)$. Substituting the lower bound from \eqref{eq:lower_bound_exp_final}, we obtain $\mathbb{P}(S)(1 - \delta(H,K)) + \delta(H,K) \geq v(x_0) - |\beta|H$. Rearranging for $\mathbb{P}(S)$ yields the  result.
\end{proof}

\paragraph{\textbf{The proof of Theorem \ref{thm:sde_lower_weighted}}}

\begin{proof}
We analyze the stopped process $\tilde{X}_t = X_{t \wedge \tau_{safe}}$ and its corresponding occupation time $\tilde{O}_{\mathcal{T}}(t) = \int_0^t 1_{\mathcal{T}}(\tilde{X}_s) ds$. By Lemma \ref{lemma:eot}, $\mathbb{P}(O_{\mathcal{T}}(H) \geq K) = \mathbb{P}(\tilde{O}_{\mathcal{T}}(H) \geq K)$. We analyze the stopped process $\tilde{X}_t$ using the function $Y_t = \lambda(2 \tilde{O}_{\mathcal{T}}(t) - t)$.

\textbf{Step 1: Construction of the Scaled Stopped Process.}  We construct the scaled stochastic process on the stopped trajectory $Z_t = v(\tilde{X}_t)e^{Y_t}$.

To analyze the behavior of $Z_t$, we apply It\^{o}'s product rule. Since $Y_t$ is continuous and of finite variation (its derivative exists almost everywhere as $\dot{Y}_t = \lambda(2\cdot 1_{\mathcal{T}}(\tilde{X}_t) - 1)$), it has zero quadratic variation. Therefore, the differential of $Z_t$ is
\begin{equation*}
dZ_t = e^{Y_t}dv(\tilde{X}_t) + v(\tilde{X}_t)d(e^{Y_t}).
\end{equation*}
For the first term, applying It\^{o}'s formula to $v(\tilde{X}_t)$ yields
$dv(\tilde{X}_t)=\mathcal{L}v(\tilde{X}_t)dt+\nabla v(\tilde{X}_t)\sigma(\tilde{X}_t)dW_t$.
Next, since $Y_t$ is a finite-variation process, $d(e^{Y_t}) = e^{Y_t} dY_t$. Using the definition of $Y_t$, $dY_t = \lambda(2\,1_{\mathcal T}(\tilde X_t)-1)dt$.
Substituting these expressions into the product rule gives $dZ_t =e^{Y_t}\Big(\mathcal{L}v(\tilde{X}_t)+\lambda(2\,1_{\mathcal T}(\tilde{X}_t)-1)v(\tilde{X}_t)\Big)dt+e^{Y_t}\nabla v(\tilde{X}_t)\sigma(\tilde{X}_t)dW_t$. Thus, the drift of $Z_t$ is $\mu_Z(t)=e^{Y_t}\Big(\mathcal{L}v(\tilde{X}_t)+\lambda(2\,1_{\mathcal T}(\tilde{X}_t)-1)v(\tilde{X}_t)\Big)$.

We evaluate this drift in three distinct regimes based on the location of the stopped state $\tilde{X}_t$:

\begin{itemize}
\item \textbf{Case 1 (Inside Target $x \in \mathcal{T}$):} The weighting function grows with $\frac{d Y}{dt} = \lambda(2\cdot 1-1) = \lambda$. Substituting the weighted drift condition $\mathcal{L}v(x) \geq -\lambda v(x) + \beta$, we obtain 
$\mu_Z(t)= e^{Y_t} (\mathcal{L}v(x) + \lambda v(x)) \geq e^{Y_t} (-\lambda v(x) + \beta + \lambda v(x)) = \beta e^{Y_t}$.

\item \textbf{Case 2 (Outside Target $x \in \mathcal{X}\setminus\mathcal{T}$):} The weighting function decays with $\frac{d Y}{dt} = \lambda(2\cdot 0-1) = -\lambda$. Substituting the condition $\mathcal{L}v(x) \geq \lambda v(x) + \beta$, we obtain $\mu_Z(t)= e^{Y_t} (\mathcal{L}v(x) - \lambda v(x)) \geq e^{Y_t} (\lambda v(x) + \beta - \lambda v(x)) = \beta e^{Y_t}$.

\item \textbf{Case 3 (Boundary $x \in \partial\mathcal{X}$):} The system is frozen ($d\tilde{X}_t = 0$), so $\mathcal{L}v(x) = 0$. However, time continues to pass, and since $\partial\mathcal{X} \cap \mathcal{T} = \emptyset$, the weight decays with $\dot{Y}_t = -\lambda$. The drift becomes $\mu_Z(t) = e^{Y_t} (0 - \lambda v(x))$. By the Sink Condition, $\lambda v(x) + \beta \leq 0 \implies -\lambda v(x) \geq \beta$. Thus, $\mu_Z(t) \geq \beta e^{Y_t}$.
\end{itemize}

In all cases, the state-dependent terms cancel perfectly, yielding the global lower bound on the drift: $\mu_Z(t) \geq \beta e^{Y_t}$.

\textbf{Step 2: Integration.}
Let $\tau_K=\inf\{t:\tilde O_{\mathcal{T}}(t)=K\}$ be the completion time and define the bounded stopping time $\tau=\tau_K\wedge H$.

Integrating up to the stopping time $\tau$, we have $
    \mathbb{E}[Z_{\tau}\mid \tilde X_0=x_0] \geq v(x_0) + \beta \cdot \mathbb{E}\left[ \int_0^\tau e^{Y_s} ds \mid \tilde X_0=x_0 \right]$.
Since $Y_s=\lambda(2\tilde O_{\mathcal T}(s)-s)\ge -\lambda s$, we obtain the pathwise bound $\int_0^\tau e^{Y_s}ds
\ge
\int_0^K e^{-\lambda s}ds
=
\frac{1-e^{-\lambda K}}{\lambda}$. Therefore, $
\mathbb E\!\left[\int_0^\tau e^{Y_s}ds \mid \tilde X_0=x_0\right]
\ge \frac{1-e^{-\lambda K}}{\lambda}$. Substituting into the drift inequality gives $\mathbb{E}[Z_\tau\mid \tilde X_0=x_0]
\ge
v(x_0)
+
\beta \frac{1-e^{-\lambda K}}{\lambda}$.
    

\textbf{Step 3: Probability Bound.} We relate the expected terminal value to the probability of success by decomposing the outcome space into two events: Success ($S = {\tilde{O}_{\mathcal{T}}(H) \geq K}$) and No Success ($S^c$).

\begin{enumerate}
\item \textbf{On Success ($S$):} The process stops at $\tau = \tau_K$. At this instant, the occupation time is exactly $K$, and the total time elapsed is $\tau \geq K$. The weighting exponent is $Y_\tau = \lambda(2K - \tau)$. Since $\tau \geq K$, the exponent satisfies $Y_\tau \leq \lambda K$. Combined with the bound $v(x) \leq 1$ on the target set, we have $Z_\tau = v(\tilde{X}_\tau) e^{Y_\tau} \leq 1 \cdot e^{\lambda K} = e^{\lambda K}$.

\item \textbf{On No Success ($S^c$):} The process stops at $\tau = H$ without achieving the occupation threshold ($\tilde{O}_{\mathcal{T}}(H) < K$). The occupation time is effectively bounded by $K$ (since we failed). The weighting exponent is $Y_H = \lambda(2\tilde{O}_{\mathcal{T}}(H) - H) < \lambda(2K - H)$. Using the global bound $|v(x)| \leq M$, the terminal value is bounded by $Z_H \leq M e^{\lambda(2K - H)} = \delta_W(H, K)$. 
\end{enumerate}

Combining these cases, the expected value is upper-bounded by $
\mathbb{E}[Z_\tau\mid \tilde X_0=x_0] \leq \mathbb{P}(S) e^{\lambda K} + (1 - \mathbb{P}(S)) \delta_W(H, K)$.

Finally, we combine this upper bound with the submartingale lower bound derived in Step 2  to obtain the final bound stated in the theorem.
\end{proof}

\end{document}